\newcommand{\powerfx}{\textit{Mod 1}}
\newcommand{\diskdump}{\textit{Mod 2}}
\newcommand{\hid}{\textit{Mod 3}}
\newcommand{\usbstor}{\textit{Mod 4}}
\newcommand{\mpiomsdm}{\textit{Mod 5}}
\newcommand{\mpiocontrol}{\textit{Mod 6}}
\newcommand{\gpio}{\textit{Mod 7}}
\newcommand{\scsiport}{\textit{Mod 8}}
\newcommand{\iscsi}{\textit{Mod 9}}
\newcommand{\usbhub}{\textit{Mod 10}}
\newcommand{\xusb}{\textit{Mod 11}}
\newcommand{\db}{\textit{Mod 12}}
\newcommand{\io}{\textit{Mod 13}}
\newcommand{\port}{\textit{Mod 14}}
\newcommand{\ksfilter}{\textit{Mod 15}}
\newcommand{\audio}{\textit{Mod 16}}
\newcommand{\varsp}{\#}
\newcommand{\asN}{aS_0}
\newcommand{\tagged}{\textit{tagged}}
\newcommand{\pt}[1]{pt(\m{#1})}
\newcommand{\m}[1]{\mathsf{#1}}
\newcommand{\ptf}[2]{pt(#1\m{.#2})}
\newcommand{\defaultVar}{\textit{defaultVar}}
\newcommand{\nonNullExprs}{\textit{nonNullExprs}}
\newcommand{\hashFunction}{\textit{hashFunction}}
\newcommand{\emp}{\{\}}
\newcommand{\Omit}[1]{}
\newcommand{\Null}{\textsc{Null}\xspace}
\newtheorem{theorem}{Theorem}
\newtheorem{lemma}[theorem]{Lemma}
\title{Precise Null Pointer Analysis Through\\Global Value Numbering}
\author{Ankush Das
\institute{Carnegie Mellon University\\Pittsburgh, PA, USA}
\email{ankushd@cs.cmu.edu}
\and
Akash Lal
\institute{Microsoft Research\\Bangalore, India}
\email{akashl@microsoft.com}
}
\begin{document}
\maketitle

\begin{abstract}
Precise analysis of pointer information plays an important role in many static
analysis techniques and tools today. The precision, however, must be balanced
against the scalability of the analysis. This paper focusses on improving the
precision of standard context and flow insensitive alias analysis algorithms
at a low scalability cost. In particular, we present a semantics-preserving
program transformation that drastically improves the precision of existing
analyses when deciding if a pointer can alias \Null. 
Our program transformation is based on Global Value Numbering, a scheme 
inspired from compiler optimizations literature. It allows even a flow-insensitive
analysis to make use of branch conditions such as checking if a pointer is \Null
and gain precision.
%It is common defensive
%practice to check if a pointer is NULL before dereferencing it. Existing flow
%insensitive analyses discard such NULL-checks and lose precision. Our program transformation,
%based on Global Value Numbering, a scheme inspired from compiler optimizations,
%allows such analyses to make use of NULL-checks and stop spurious propagation of
%NULL in the alias sets of variables.
We perform experiments on real-world code to measure the overhead
in performing the transformation and the improvement in the precision of the
analysis. We show that the precision improves from 86.56\% to 98.05\%, while
the overhead is insignificant. 

%We also prove
%the correctness of the transformation, specifically that it preserves semantics.\\

{\bf Keywords:} Alias Analysis, Global Value Numbering,
Static Single Assignment
\end{abstract}

\section{Introduction}
Null-pointer exceptions directly affect software reliability because such
exceptions can bring down the application. Detecting and eliminating these bugs 
is an important step towards developing reliable systems. Static analysis tools that
look for null-pointer exceptions typically employ techniques based on \textit{alias
analysis} to detect possible aliasing between pointers. 
Two pointer-valued variables are said to \textit{alias} if they hold the same
memory location during runtime. Aliasing can be decided in two ways:
(a) \textit{may-alias}~\cite{andersen-thesis},
where two pointers are said to may-alias if they 
can point to the same memory location under some possible execution, 
and (b) \textit{must-alias}~\cite{steensgaard-popl96}, where 
two pointers must-alias if they always point to the same memory location
under all possible executions.
Because a precise alias analysis is undecidable~\cite{RamaUndec}
and even a flow-insensitive pointer analysis is NP-hard~\cite{HorwitzNP},
much of the research in the area plays on the precision-efficiency trade-off
of alias analysis. For example, practical algorithms for may-alias analysis lose precision
(but retain soundness) by over-approximating: a verdict that two pointer may-alias  does not imply
that there is some execution in which they actually hold the same value. Whereas, a verdict that
two pointers cannot may-alias must imply that there is no execution in which they hold the same
value.

We use a sound may-alias analysis in an attempt to prove the safety of a program with
respect to null-pointer exceptions. For each pointer dereference, we ask the analysis if
the pointer can may-alias \Null just before the dereference. If the answer is that it cannot, then
the pointer cannot hold a \Null value under all possible executions, hence the dereference is safe.
%Because of the conservative nature of most analyses, it can only be used to show safety, 
%but not the presence of bugs. 
The more precise the analysis, the more
dereferences it can prove safe. This paper demonstrates a technique that improves the precision
of may-alias analysis at little cost when answering aliasing queries of pointers with the
\Null value.

\Omit{
This work forms a part of a larger effort towards finding
software defects \cite{das2015angelic}; the dereferences that are not proved safe by the alias analysis
are fed to a powerful software verifier optimized towards finding defects (as opposed to proving safety).
The verifier is designed to be very precise but it is also expensive (especially when compared to the alias analysis).
It is important for the alias analysis to be as precise as possible to lower the overall cost.

While this paper focuses on alias analysis from the point of view of finding
software defects, aliasing information has many other uses in compliers, for instance in
code motion, compile-time garbage collection, dependence analysis and code generation.

Another common use of alias analysis is as a pre-processor to verification 
tools. Consider the case of checking null pointer exceptions.
To approach complete safety, it is necessary to assert that each pointer
is not null, before it is dereferenced. This is achieved by inserting a non-null
assertion before every pointer dereference, and then providing the program
to a verifier. 
%Since these programs are typically open, the verifier
%needs to model the environment to precisely infer these assertions.
%However, most of these assertions are safe, as we will demonstrate in our
%results, and do not need any environment modeling to be inferred.
May-alias analysis techniques come handy in these scenarios, and are typically
performed before running the full verifier on the input program.
These techniques
provide a may-points-to set for each pointer, denoting the set of memory
locations the pointer can point to under all possible executions. If the location
corresponding to null is not present in the points-to set, then the pointer can
never point to null, and it is safe to prune away all assertions concerning that
pointer.
}

The \Null value is special because programmers tend to be defensive against null-pointer exceptions.
If there is doubt that a pointer, say $\m{x}$, can be \Null or not, the programmer would use a check
``$\m{if \;(x \neq \Null)}$'' before dereferencing $\m{x}$. Existing alias analysis techniques, especially
flow insensitive techniques for may-alias analysis, ignore all branch conditions. 
As we demonstrate in this paper, exploiting these defensive checks can significantly increase the precision of 
alias analysis. Our technique is based around a semantics-preserving program transformation and requires only a minor change to the
alias analysis algorithm itself.

Program transformations have been used previously to improve the precision for alias
analysis. For instance, it is common to use a \textit{Single Static Assignment} (SSA) 
conversion \cite{CytronSSA} before running flow-insensitive analyses. The use of SSA
automatically adds some level of flow sensitivity to the analysis \cite{hasti1998using}.
SSA, while useful, is still limited in the amount of precision that it adds, and in particular,
it does not help with using branch conditions.
We present a program transformation based on \textit{Global Value Numbering} (GVN) \cite{KildallGVN}
that adds significantly more precision on top of SSA by leveraging branch conditions.

The transformation works by first inserting an assignment $\m{v := e}$ on the \textit{then} branch of
a check $\m{if \; (e \neq \Null)}$, where $\m{v}$ is a fresh program variable. This gives us the
global invariant that $\m{v}$ can never hold the \Null value. However, this invariant will be of no use unless
the program uses $\m{v}$. Our transformation then searches locally, in the same procedure, for 
program expressions $\m{e'}$ that are \textit{equivalent} to $\m{e}$, that is, at runtime they both hold the same value.
The transformation then replaces the use of $\m{e'}$ with $\m{v}$. The search for equivalent expressions
is done by adapting the GVN algorithm (originally designed for compiler optimizations \cite{DBLP:conf/popl/GulwaniN04}). 
%Note that the search for equivalent expressions can also be done via a must-alias analysis, however,
%given that our whole purpose is to optimize an alias analysis, we would like the transformation to be as efficient as
%possible and ideally much faster than running a full-block must-alias analysis.

Our transformation can be followed with a standard alias analysis 
to infer the points-to set for each variable, with a slight change that the new variables 
introduced by our transformation (such as $\m{v}$ above)
cannot be \Null. This change stops spurious propagation of \Null and makes the analysis
more precise. 
We perform extensive experiments on real-world code. The results show that 
the precision of the alias analysis (measured in terms of the number of 
pointer dereferences proved safe) goes from 86.56\% to 98.05\%. This work
is used inside Microsoft's Static Driver Verifier tool \cite{sdvurl} for
finding null-pointer bugs in device 
drivers\footnote{\url{https://msdn.microsoft.com/en-us/library/windows/hardware/mt779102(v=vs.85).aspx}}.

%Also, the complexity of the transformation is linear
%in the size of the program, while the alias analysis itself is worst-case
%cubic in the program size, hence the transformation should not be
%the bottleneck in the complete analysis. 
The rest of the paper is organized as follows: Section \ref{Se:Background} provides background on flow-insensitive 
alias analysis and how SSA can improve its precision. Section \ref{Se:Overview} illustrates our program transformation
via an example and Section \ref{Se:Algorithm} presents it formally.
Section \ref{Se:Experiments} presents experimental results,
Section \ref{Se:RelatedWork} describes some of the
related work in the area and Section \ref{Se:Conclusion} concludes.
Finally, Appendix \ref{Se:Proof} proves that the transformation preserves
program semantics.

\section{Background}
\label{Se:Background}

\subsection{Programming Language}
We introduce a simplistic language to demonstrate the alias analysis
and how program transformations can be used to increase its precision. 
As is standard, we concern ourselves only
with statements that manipulate pointers. All other statements are
ignored (i.e., abstracted away) by the alias analysis.
Our language has assignments with one of the following forms:
pointer assignments $\m{x := y}$, dereferencing via field writes $\m{x.f := y}$ and
reads $\m{x := y.f}$, creating new memory locations $\m{x := new()}$, or 
assigning \Null as $\m{x := \Null}$. The language also has
$\m{assume}$ and $\m{assert}$ statements:
\begin{itemize}
\item $\m{assume \; B}$ checks the Boolean condition $\m{B}$ and continues execution only if the condition
evaluates to \textit{true}. The $\m{assume}$ statement is a convenient way of modeling branching in most existing
source languages. For instance, a branch ``$\m{if \; (B)}$" can be modeled using two basic blocks, one
beginning with $\m{assume \; B}$ and the other with $\m{assume \; \neg B}$.
\item $\m{assert \; B}$ checks the Boolean condition $\m{B}$ and
continues execution if it holds. If $\m{B}$ does not hold, then it raises an assertion failure  and stops
program execution. 
\end{itemize}

A program in our language begins with global variable declarations
followed by one or more procedures.  Each procedure starts with 
declarations of local variables, followed by a sequence of basic blocks. Each basic block starts with a label,
followed by a list of statements, and ends with a control transfer, which is either a $\m{goto}$ or a
$\m{return}$. A $\m{goto}$ in our language can take multiple labels. The choice between which label to jump is
non-deterministic. Finally, we disallow loops in the control-flow of a procedure; they can instead be encoded using
procedures with recursion. This restriction simplifies the presentation of our algorithms. 
Figure~\ref{fig:example} shows an illustrative example in our language. 

\begin{figure}
\begin{minipage}{.5\linewidth}
\begin{lstlisting}[language=boogie, columns=fullflexible] 
var x : int
procedure f(var y : int) returns u : int
{
    var z : int
    L1:
        x := y.f;
        assume (x != Null);
        goto L2;
    L2:
	z.g := y;
	assert (x != Null);
	u := x;
	return;
}
\end{lstlisting}
\end{minipage}
\begin{minipage}{.5\linewidth}
\begin{lstlisting}[language=boogie, columns=fullflexible]
procedure main()
{
    var a : int;
    var b : int;
    L1:
    	a := new();
	b := call f(a);
	goto L2;
    L2:
	return;
}
\end{lstlisting}
\end{minipage}
\vspace{-3ex}
\caption{An example program in our language}
\vspace{-2ex}
\label{fig:example}
\end{figure}

%\begin{lstlisting}
%program	-> decls; procs
%decls	-> decl; decls | eta
%decl	-> var variable
%procs	-> proc; procs | eta
%proc	-> decl; blocks
%blocks	-> block; blocks | eta
%block	-> label; stmts; exitstmt
%stmts	-> stmt; stmts | eta
%stmt	-> assume expr |
%	   assert expr |
%	   lhs := expr
%expr	-> op exprs |
%	   variable
%exprs	-> expr, exprs | eta
%lhs	-> variable exprs
%variable-> variableN | variableSp
%	   
%\end{lstlisting}

\subsection{Alias Analysis}\label{sec:aa}
This section describes Andersen's may-alias analysis~\cite{andersen-thesis}.
The analysis is context and flow-insensitive, which means that it completely abstracts away
control of the program. But the analysis is field-sensitive, which means that a value can be obtained
by reading a field $\m{f}$ only if it was previously written to the same field $\m{f}$. Field-insensitive 
analyses, for example, also abstract away the field name.

The analysis outputs an over-approximation of  the set of memory locations each pointer can
hold under all possible executions.
Since a program can potentially execute indefinitely (because of loops or recursion), the number of 
memory locations allocated by a program can be
unbounded. We consider a finite abstraction of memory locations, commonly
called the \textit{allocation-site abstraction} \cite{jones1982flexible}. Each memory location
allocated by the same $\m{new}$ statement is represented using the same abstract value. An abstract
value is also called an allocation site. 
We label each $\m{new}$ statement with a unique number $i$ and refer to its corresponding allocation
site as $aS_i$. We use the special allocation site $\asN$ to denote $\Null$.

\begin{figure}
\begin{minipage}{.45\linewidth}
\begin{tabular}{c | c}
\textbf{Statement} & \textbf{Constraint} \\
\hline \\
[-9pt]
$\m{i: \quad x := new()}$ & $aS_i \in \pt{x}$ \\
[5pt]
$\m{x := \Null}$ & $aS_0 \in \pt{x}$ \\
[5pt]
$\m{x := y}$ & $\pt{y} \subseteq \pt{x}$ \\
[5pt]
$\m{x := y.f}$ & $\infer{\ptf{aS_i}{f} \subseteq \pt{x}}{aS_i \in \pt{y}}$ \\
[5pt]
$\m{x.f := y}$ & $\infer{\pt{y} \subseteq \ptf{aS_i}{f}}{aS_i \in \pt{x}}$
\end{tabular}
\caption{Program statements and \newline corresponding points-to set constraints}
\vspace{20ex}	
\label{constraints}
\end{minipage}
\begin{minipage}{.55\linewidth}
\begin{algorithm}[H]
\caption{Algorithm for computing points-to sets}
\label{aa-algo}
\begin{algorithmic}[1]
\State For each program variable $\m{x}$, let $\pt{x} = \emptyset$ 
\Repeat
  \State $opt  := pt$ 
  \ForAll{program statements $\texttt{st}$}
     \If{$\texttt{st}$ is $\m{i: x := new()}$}
         \State $\pt{x} := \pt{x} \cup \{aS_i\}$
     \EndIf
     \If{$\texttt{st}$ is $\m{x := \Null}$}
         \State $\pt{x} := \pt{x} \cup \{aS_0\}$
     \EndIf
     \If{$\texttt{st}$ is $\m{x := y}$}
         \State $\pt{x} := \pt{x} \cup \pt{y}$
     \EndIf
     \If{$\texttt{st}$ is $\m{x := y.f}$}
         \ForAll{$aS_i \in \pt{y}$}
            \State $\pt{x} := \pt{x} \cup \ptf{aS_i}{f}$
         \EndFor
     \EndIf
     \If{$\texttt{st}$ is $\m{x.f := y}$}
         \ForAll{$aS_i \in \pt{x}$}
            \State $\ptf{aS_i}{f} := \ptf{aS_i}{f} \cup \pt{y}$
         \EndFor
     \EndIf
     \ForAll{$\tagged(\m{x})$} \label{tagged-line}
        \State $\pt{x} := \pt{x} - \{aS_0\}$
      \EndFor
  \EndFor
\Until{$opt = pt$}
\end{algorithmic}

\end{algorithm}
\end{minipage}
\end{figure}

We follow a description of Andersen's analysis in terms of set constraints \cite{manu-lncs},
shown in Figure~\ref{constraints}.
The analysis outputs a points-to relation 
$pt$ where $\pt{x}$ represents the points-to set of a variable $\m{x}$, 
i.e. (an over-approximation of) the set of allocation sites that $\m{x}$ may hold under all possible executions. In addition, it
also computes $\ptf{aS_i}{f}$, for each allocation site $aS_i$ and field $\m{f}$, representing (an over-approximation of)
the set of values written to the $\m{f}$ field of an object represented by $aS_i$.

The analysis abstracts away program control along with $\m{assert}$ and
$\m{assume}$ statements. It essentially considers a program as a bag
of pointer-manipulating statements where
any statement can be executed any number of times and in any order.
For each statement, the analysis follows Figure~\ref{constraints} to generate
a set of rules that define constraints on the points-to solution $pt$. The rules can be read as follows.

\begin{itemize}
\item If a program has an allocation $\m{x := new()}$ and this statement is labelled with the unique integer $i$, then
the solution must have $aS_i \in \pt{x}$.
\item If a program has the statement $\m{x := NULL}$, then it must be that $aS_0 \in \pt{x}$.
\item If the program has an assignment $\m{x := y}$ then the solution must have $\pt{y} \subseteq \pt{x}$, 
because $\m{x}$ may hold any value that $\m{y}$ can hold. 
\item If the program has a statement $\m{x := y.f}$ and $aS_i \in \pt{y}$, then it must be that
$\ptf{aS_i}{f} \subseteq \pt{x}$ because $\m{x}$ may hold any value written to the $\m{f}$ field of $aS_i$.
\item If the program has a statement $\m{x.f := y}$ and $aS_i \in \pt{x}$ then it must be that
$\pt{y} \subseteq \ptf{aS_i}{f}$.
\end{itemize}

These set constraints can be solved using a simple fix-point iteration, shown in Algorithm \ref{aa-algo}.
(Our tool uses a more efficient implementation \cite{manu-lncs}.) For now, ignore the loop on line \ref{tagged-line}.
Once the solution is computed,
we check all assertions in the program. We say that an assertion $\m{assert \; (x \neq \Null)}$
is \textit{safe} (i.e., the assertion cannot be violated) if $\asN \not\in \pt{x}$. We do not consider
other kinds of assertions in the program because our goal is just to show null-exception safety.
Andersen's analysis complexity is cubic in the size of the program,
and for k-sparse programs, the worst case complexity is quadratic~\cite{SridharanF09}.

\subsection{Static Single Assignment (SSA)}

This section shows how a program transformation can improve the precision of an alias analysis. 
Consider the following program.
\begin{boogie}
x := new();
assert (x != Null);
y := x.f;
x := Null;
\end{boogie}
A flow-insensitive analysis does not look at the order of statements. Under this abstraction,
the analysis cannot prove the safety of the assertion in the above snippet of code because it does
not know that the assignment of $\Null$ to $\m{x}$ only happens after the assertion.

To avoid such loss in precision, most practical implementations of alias analysis use the Single Static
Assignment (SSA) form \cite{CytronSSA}. Roughly, SSA introduces multiple copies of each original variable
such that each variable in the new program only has a single assignment. The following is the SSA form
of the snippet shown in the beginning of this section.

\begin{boogie}
x1 := new();
assert (x1 != Null);
y := x1.f;
x2 := Null;
\end{boogie}

Clearly, this program has the same semantics as the original program. But a flow-insensitive analysis
will now be able to show the safety of the assertion in the program because the assignment of $\Null$
is to $\m{x2}$ whereas the assertion is on $\m{x1}$.

\section{Overview}
\label{Se:Overview}

This section presents an overview of our technique of using stronger
program transformations that add even more precision to the alias analysis
compared to the standard SSA. We start by using 
\textit{Common Subexpression Elimination} \cite{CockeCSE} and build towards
using \textit{Global Value Numbering} \cite{KildallGVN}, which is used in our 
implementation and experiments.

\subsection{Common Subexpression Elimination}
We demonstrate how we can leverage $\m{assume}$ and $\m{assert}$ statements
to add precision to the analysis. Consider the following program.
\begin{boogie}
assume (x != Null);
y := x;
assert (x != Null);
z := x.f;
\end{boogie}
Once the program control passes the $\m{assume}$ statement, we know that
$\m{x}$ cannot point to $\m{\Null}$, hence the assertion is safe, irrespective of what preceded this code
snippet. Note that SSA renaming does not help prove the assertion in this case (it is essentially a no-op for
the above snippet). We now make the case for a different program transformation.

As a first step, we introduce a new local variable $\m{cseTmp^\varsp}$ to the procedure and assign it the value
of $\m{x}$ right after the $\m{assume}$. These new variables that we introduce to the program will carry
the tag ``$\varsp$'' to distinguish them from other program variables. For a tagged variable $\m{w^\varsp}$, we say that
$\tagged(\m{w}^\varsp)$ is \textit{true}. These tagged variables carry the special invariant that they cannot be 
\Null; their only assignment will be after an assume statement that prunes away the \Null value.

After introducing the variable $\m{cseTmp^\varsp}$, we make use of a technique similar to
\textit{Common Subexpression Elimination} (CSE) to replace all expressions
that are equivalent to $\m{cseTmp^\varsp}$ with the variable itself, resulting in the following program:

\begin{boogie}
assume (x != Null);
cseTmp$^\varsp$ := x;
y := cseTmp$^\varsp$;
assert (cseTmp$^\varsp$ != Null);
z := cseTmp$^\varsp$.g;
\end{boogie}

This snippet is clearly equivalent to the original one. We perform the alias analysis on this
snippet as usual, but enforce that $\pt{cseTmp^\varsp}$ cannot have $\asN$ because it cannot be $\Null$.
(See the loop on line \ref{tagged-line} of Algorithm \ref{aa-algo}.)
The analysis can now prove that the assertion is safe.

The process of finding equivalent expressions is not trivial. For instance, consider the following program
where we have introduced the variable $\m{cseTmp^\varsp}$.
\begin{boogie}
assume (x.f != Null);
cseTmp$^\varsp$ := x.f;
y.f := z;
z := x.f;
\end{boogie}
In the last assignment, $\m{x.f}$ cannot be substituted by $\m{cseTmp^\varsp}$,
because there is an assignment to the field $\m{f}$ in the previous statement.
As there is no aliasing information present at this point, we have to conservatively
assume that $\m{y}$ and $\m{x}$ could be aliases, thus, the assignment $\m{y.f := z}$
can potentially change the value of $\m{x.f}$, breaking its equivalence to $\m{cseTmp^\varsp}$.
%In general, as soon as we encounter an assignment to a subexpression, we abort the substitution.

\subsection{Global Value Numbering} \label{sec:gvn}
We improve upon the previous transformation by using a stronger method of determining expression
equalities. The methodology remains the same: we introduce
temporary variables that cannot be $\m{\Null}$ and use them to replace syntactically equivalent expressions.
But this time we adapt the Global Value Numbering (GVN) scheme to detect equivalent expressions.
Consider the following program. (For now, ignore the right-hand side of the figure after the
$\Longrightarrow$.)
\begin{boogie}[mathescape=true, numbers=left, xleftmargin=14pt,xrightmargin=14pt, firstnumber=1]
y := x.f.g;             ==> $t_1 \gets \m{x}, \; t_2 \gets t_1\m{.f}, \; t_3 \gets t_2\m{.g}, \; \m{y} \gets t_3$
z := y.h;               ==> $t_3 \gets \m{y}, \; t_4 \gets t_3\m{.h}, \; \m{z} \gets t_4$
assume (z != Null);      ==> set $t_4 \in \{\nonNullExprs\}$
a := x.f;               ==> $t_1 \gets \m{x}, \; t_2 \gets t_1\m{.f}, \; \m{a} \gets t_2$
b := a.g.h;             ==> $t_2 \gets \m{a}, \; t_3 \gets t_2\m{.g}, \; t_4 \gets t_3\m{.h}, \; \m{b} \gets t_4$
assert (b != Null);      ==> check $t_4 \in \{\nonNullExprs\}$
c.g := d;
\end{boogie}
It is clear that $\m{z}$ and $\m{b}$ are equivalent at the assertion location,
and since $\m{z \neq \Null}$, the assertion is safe. However, none of the previous methods would allow
us to prove the safety of the assertion. We adapt the GVN scheme to help us
establish the equality between $\m{z}$ and $\m{b}$.
We introduce the concept of \textit{terms} that will be used as a
placeholder for subexpressions. The intuitive idea is that equivalent
subexpressions will be represented using the same term. 

We start by giving an overview of the transformation for a single basic block,
and then generalize it to full procedure later in this section.
For a single basic block, we walk through the statements in order and as we encounter
a new variable, we assign it a new term and remember this mapping in a dictionary
called $hashValue$. We also store the mapping from terms to other terms
through operators in a separate dictionary called $\hashFunction$. For example, if $\m{x}$
is assigned term $t_1$, and we encounter the assignment $\m{y := x.f}$, we store
$\hashFunction[\m{f}][t_1] = t_2$ and assign the term $t_2$ to $\m{y}$.
We also maintain a separate list $\nonNullExprs$ of
terms that are not null. Finally, for performing the actual substitution, we maintain
a dictionary $\defaultVar$ that maps terms to the temporary variables that we introduce
for non-null expressions.

We go through the program snippet starting at the first statement and move down to the last statement. At
statement $i$, we follow the description written in the $i^\text{th}$ item below. This description is also
shown on the right side of the program snippet, after the $\Longrightarrow$ arrow.
\begin{sloppy}
\begin{enumerate}
\item Assign a new term $t_1$ to $\m{x}$, and set $hashValue[\m{x}] = t_1$. Then,
set $\hashFunction[\m{f}][t_1] = t_2$, and $\hashFunction[\m{g}][t_2] = t_3$.
Finally the assignment to $\m{y}$ adds $hashValue[\m{y}] = t_3$.

\item We already have $hashValue[\m{y}] = t_3$, so assign
$\hashFunction[\m{g}][t_3] = t_4$. The assignment adds $hashValue[\m{z}] = t_4$.

\item We have $hashValue[\m{z}] = t_4$. So, we add $t_4$ to $\nonNullExprs$.
We create a new temporary variable $\m{gvnTmp}^\varsp$, and construct
an extra assignment $\m{gvnTmp^\varsp := z}$, and add it after the $\m{assume}$
statement. Because $hashValue[\m{z}] = t_4$, we also add 
$\defaultVar[t_4] = \m{gvnTmp^\varsp}$, which we will use later for substitutions
to all expressions that hash to $t_4$.

\item We already have $hashValue[\m{x}] = t_1$ and $\hashFunction[\m{f}][t_1] = t_2$,
so we add $hashValue[\m{a}] = t_2$.

\item We have $hashValue[\m{a}] = t_2$, $\hashFunction[\m{g}][t_2] = t_3$ and
$\hashFunction[\m{h}][t_3] = t_4$. Hence, the hash value of the expression $\m{a.g.h}$ is $t_4$.
We also have $\defaultVar[t_4] = \m{gvnTmp^\varsp}$. At this point, we observe
$t_4$ being in $\nonNullExprs$ and substitute
the RHS $\m{a.g.h}$ with $\m{gvnTmp^\varsp}$. Finally, we add $hashValue[\m{b}] = t_4$.

\item Because $hashValue[\m{b}] = t_4$ and $\defaultVar[t_4] = \m{gvnTmp^\varsp}$
and $\nonNullExprs$ contains $t_4$,
we replace the expression $\m{b}$ with $\m{gvnTmp^\varsp}$. 
\end{enumerate}
\end{sloppy}

The resulting code is shown below. 
\begin{boogie}[mathescape=true, numbers=left, xleftmargin=14pt,xrightmargin=14pt, firstnumber=1]
y := x.f.g;             
z := y.h;               
assume (z != Null);     
gvnTmp$^\varsp$ := z;
a := x.f;               
b := gvnTmp$^\varsp$;   
assert (gvnTmp$^\varsp$ != Null);
c.g := d;
\end{boogie}

Clearly, we retain the invariant that $\#$-tagged variables
cannot be \Null, and it is now straightforward to
prove the safety of the assertion. We also note that the expression
substitution is performed in a conservative manner. It
is aborted as soon as a subexpression is assigned to. For example, at
line 8, we encounter an assignment to the field $\m{g}$, so we remove
$\m{g}$ from the dictionary $\hashFunction$. This has the effect of $\m{g}$
acting as a new field, and all terms referenced by this field will now be assigned
new terms.

The above transformation, in general, is performed on the entire procedure,
not just a basic block to fully exploit its potential. This occurs in two steps.
First, loops are lifted and converted to procedures (with recursion), so that the control-flow
of each resulting procedure is acyclic. Next, we perform a topological sort~\cite{aspvall1979linear}
of the basic blocks of a procedure and analyze the blocks in this order. This ensures that by the
time the algorithm visits a basic block, it has already processed all predecessors of the block.

When analyzing a block, the algorithm considers all its predecessors and takes
the intersection of their $\nonNullExprs$ list and $hashValue$ map. This is because
an expression is non-null only if it is non-null in all its predecessors and, further, we can
use a term for a variable only if it is associated with the same term in all its
predecessors.
Finally, an important aspect of the algorithm is to perform a sound substitution at
the merge point of two basic blocks. Consider the code snippet below.
\begin{boogie}
L1:
  assume (x != Null);
  gvnTmp$^\varsp_1$ := x;
  goto L3;
L2:
  assume (x != Null);
  gvnTmp$^\varsp_2$ := x;
  goto L3;
L3:
  assert (x != Null);
\end{boogie}
In this example, although $\m{x}$ is available as a non-null expression in $\m{L3}$,
we cannot substitute $\m{x}$ in the assertion by either
$\m{gvnTmp}^\varsp_1$ or $\m{gvnTmp}^\varsp_2$ because neither preserves program
semantics. Instead, we introduce a
new variable $\m{gvnTmp}^\varsp_3$ and add the assignment
$\m{gvnTmp}^\varsp_3 := \m{x}$ right before the assertion in $\m{L3}$ and use 
that for substituting $\m{x}$.
This is achieved by the map $var2expr$
in the main algorithm. It maps the current block and the $\varsp$-tagged
variable to the expression it will substitute for. In the above program, let's say we
assign the term $t$ to the non-null expression $\m{x}$. Hence, $\nonNullExprs[\m{L1}]$
and $\nonNullExprs[\m{L2}]$ both contain $t$. We also have 
$\defaultVar[\m{L1}][t] = \m{gvnTmp}^\varsp_1$ and $var2expr[\m{gvnTmp}^\varsp_1] = \m{x}$.
Since $t$ is available
from all predecessors of $\m{L3}$, we know that this term is non-null in $\m{L3}$.
The question is finding the expression corresponding to this term and introducing
a new assignment for it. At this point, the map $var2expr$ comes into play. We pick
a predecessor of $\m{L3}$, say $\m{L1}$. We look for the default variable of $t$
and find $\defaultVar[\m{L1}][t] = \m{gvnTmp}^\varsp_1$, we then search for
$var2expr[\m{gvnTmp}^\varsp_1] = \m{x}$. At this point, we find that the expression
corresponding to term $t$ is $\m{x}$, and we introduce a new assignment
$\m{gvnTmp}^\varsp_3 := \m{x}$ at the start of $\m{L3}$ and use this for
substitution of $\m{x}$.
With these motivating examples, the next section describes the algorithm formally.

\section{Algorithm}
\label{Se:Algorithm}

We present the pseudocode of our program transformation in this section (Algorithms~\ref{gvn-algo}
and \ref{helper}). The transformation takes a
program as input and produces a semantically equivalent program with
new $\varsp$-tagged variables that can never be $\Null$. This involves
adding assignments for these new variables, and substituting existing
expressions with these variables whenever we determine that the substitution
will preserve semantics.

\begin{algorithm}[H]
\caption{Algorithm to perform GVN}
\label{gvn-algo}
\begin{algorithmic}[1]
\State $\nonNullExprs = \emp$ \Comment{block $\rightarrow$ non-null terms in block}
\State $var2expr = \emp$ \Comment{$\varsp$-tagged variable $\rightarrow$ expression}
\State $\defaultVar = \emp$ \Comment{block, term $\rightarrow$ variable for substitution}
\State $hashValue = \emp$ \Comment{block, variable $\rightarrow$ term}
\State $\hashFunction = \emp$ \Comment{operator, terms $\rightarrow$ term}
\State $currBlock$ \Comment{current block}
\Function{DoGVN}{}
\For{$proc$ in $program$} \label{start-add-asgn}
	\For{$block$ in $proc.Blocks$}
		\For{$stmt$ in $block.Stmts$}
			\If{$stmt$ is ``$\m{assert \; expr \neq \Null}$" or ``$\m{assume \; expr \neq \Null}$"}
				\State $\m{gvnTmp^\varsp} \gets GetNewSpecialVar()$
				\State $s \gets ``\m{gvnTmp^\varsp := expr}"$ \label{asgn1}
				\State $block.Stmts.Add(s)$
				\State $var2expr[block][\m{gvnTmp^\varsp}] \gets \m{expr}$
			\EndIf
		\EndFor
	\EndFor
\EndFor \label{end-add-asgn}
\For{$proc$ in $program$}
	\State $sortedBlocks \gets TopologicalSort(proc.Blocks)$ \label{topsort}
	\For{$block$ in $sortedBlocks$} \label{start-pre-blk}
		\State $\nonNullExprs[block] \gets \bigcap_{blk \in block.Preds} \nonNullExprs[blk]$ \label{lem2-proof}
		\State $hashValue[block] \gets \bigcap_{blk \in block.Preds} hashValue[blk]$ \label{lem3-proof}
		\State $currBlock \gets block$
		
		\For{$term$ in $\nonNullExprs[block]$}
			\State $\m{expr} \gets var2expr[\defaultVar[blk][term]]$ \Comment{for some $blk \in block.Preds$}
			\State $\m{gvnTmp^\varsp} \gets GetNewSpecialVar()$
			\State $var2expr[\m{gvnTmp^\varsp}] \gets \m{expr}$
			\State $s \gets ``\m{gvnTmp^\varsp} := \m{expr}"$ \label{asgn2}
			\State $block.Stmts.Add(s)$
		\EndFor
	\EndFor \label{end-pre-blk}
	\For{$stmt$ in $block.Stmts$} \label{start-proc}
		\State $stmt \gets ProcessStmt(stmt)$
		\If{$stmt$ is $``\m{gvnTmp^\varsp} := \m{expr}"$}
			\State $term \gets ComputeHash(\m{expr})$
			\State $\nonNullExprs[block].Add(term)$
			\State $\defaultVar[block][term] \gets \m{gvnTmp^\varsp}$ \label{def-asgn}
		\EndIf
	\EndFor \label{end-proc}
\EndFor
\EndFunction
\end{algorithmic}
\end{algorithm}

At a high level, the idea is to use $\m{assume}$ and $\m{assert}$ statements to identify 
non-null expressions. We introduce fresh $\varsp$-tagged
variables and assign these non-null expressions
to them. Then, in a second pass, we compute a \textit{term} corresponding
to each expression. These terms are assigned in a manner that
if two expressions have the same term, then they are equivalent to each other. 
If we encounter an expression $e$ with the same term as one of the
non-null expressions $e'$, we substitute $e$ with the $\varsp$-tagged variable
corresponding to $e'$.

We start by describing
the role of each data structure used in Algorithm~\ref{gvn-algo}.
\begin{itemize}
\item $\nonNullExprs$: For each block, this stores the terms of non-null expressions
for that block.

\item $var2expr$: This maps a $\varsp$-tagged variable to the
expression it is assigned to in each block. This will be used to solve the issue
discussed in the last example of Section~\ref{sec:gvn}.

\item $\defaultVar$: This maps the term corresponding to an
expression to the $\varsp$-tagged variable that will be used for its substitution.
Whenever we compute the term for an expression, if the term is present
in $\nonNullExprs$, we will use $\defaultVar$ to find the $\varsp$-tagged variable that is
going to be used for the substitution.

\item $hashValue$: It stores the term assigned to each variable in
a particular block.

\item $\hashFunction$: It stores the mapping from a field and a term
to a new term. It is used to store the term for expressions with fields.

\item $currBlock$: It keeps track of the current block and
is used while calling the helper functions.
\end{itemize}

We now explain the algorithm step by step.

\begin{enumerate}
\item 
Lines~\ref{start-add-asgn} - \ref{end-add-asgn} -- In this first pass of
the algorithm, we search for program statements of the form
``$\m{assert \; expr \neq \Null}$" or ``$\m{assume \; expr \neq \Null}$".
This guarantees that $\m{expr}$ cannot be $\Null$ after this program
location under all executions. Hence, we introduce a new variable
$\m{gvnTmp}^\varsp$ and assign $\m{expr}$ to it. This mapping is also added
to $var2expr$.

\item 
Line~\ref{topsort} -- Before doing the second pass, we perform a topological
sort on the blocks according to the control flow graph. This is necessary since
we need the information of $\nonNullExprs$ for the predecessors of a basic
block before analyzing it. Note that conflow-flow graphs of procedures in our language
must be acyclic (we convert loops to recursion), thus a topological sorting always
succeeds.

\item 
Lines~\ref{start-pre-blk} - \ref{end-pre-blk} -- We compute the set of
expressions that are non-null in all predecessors. Only these expressions
will be non-null in the current block. We also need the term for
each variable in the current block, which also comes from the intersection
of terms from all predecessors. Finally, for all the non-null expressions,
we add an assignment since these expressions may be available from
different variables in different predecessors, as discussed in Section~\ref{sec:gvn}.

\item 
Lines~\ref{start-proc} - \ref{end-proc} -- Finally, we process each statement
in the current block. This performs the substitution for each expression in the statement
($GetExpr$ function in Algorithm~\ref{helper}). $GetExpr$ computes the term
for the expression ($ComputeHash$ function in Algorithm~\ref{helper}), and
if the term is contained in $\nonNullExprs$, the substitution is performed.
Finally, if we encounter a store statement, ``$\m{v.f := expr}$", we remove
all mappings w.r.t. $\m{f}$ in $\hashFunction$. So, for the future statements
(and future blocks in the topological order), new terms will be assigned
to expressions related to field $\m{f}$.
\end{enumerate}

With this pseudocode, we will generate a semantically equivalent program,
and as we show in our experiments, will have improved precision with regard
to alias analysis. The main reason behind this improvement is that
these $\varsp$-tagged variables can never contain $\asN$ in the points-to set, hence
$\asN$ cannot flow through these variables in the analysis, while
earlier, there was no such restriction and $\Null$ could flow freely.
The pseudocode for the algorithm is demonstrated in Algorithms~\ref{gvn-algo}
and \ref{helper}. 

\begin{algorithm}[H]
\caption{Helper Functions for DoGVN}\label{helper}
\begin{algorithmic}[1]
\Function{ProcessStmt}{$stmt$}
\If{$stmt$ is ``$\m{assume \; expr}$" or ``$\m{assert \; expr}$"}
	\State $\m{expr} \gets GetExpr(\m{expr})$
	\State \Return $stmt$
\ElsIf{$stmt$ is ``$\m{v := expr}$"}
	\State $hashValue[currBlock][\m{v}] \gets ComputeHash(\m{expr})$
	\State $\m{expr} \gets GetExpr(\m{expr})$
	\State \Return $stmt$
\ElsIf{$stmt$ is ``$\m{v.f := expr}$"}
	\State $\m{expr} \gets GetExpr(\m{expr})$
	\State $\m{v} \gets GetExpr(\m{v})$
	\State $\hashFunction.Remove(\m{f})$
	\State \Return $stmt$
\EndIf
\EndFunction
\Function{GetExpr}{$\m{expr}$}
\If{$\m{expr}$ is $\m{v}$}
	\State $term \gets ComputeHash(\m{v})$
	\If{$\nonNullExprs[currBlock]$ contains $term$} \label{lem1-proof}
		\State \Return $\defaultVar[currBlock][term]$ \label{main-proof1}
	\EndIf
	\State \Return $\m{v}$
\EndIf
\If{$\m{expr}$ is ``$\m{v.f}$"}
	\State $\m{v} \gets GetExpr(\m{v})$
	\State \Return ``$\m{v.f}$"
\EndIf
\EndFunction
\Function{ComputeHash}{$\m{expr}$}
\If{$\m{expr}$ is $\m{v}$}
	\If{$hashValue[currBlock]$ does not contain $\m{v}$}
		\State $hashValue[currBlock][\m{v}] \gets GetNewTerm()$
	\EndIf
	\State \Return $hashValue[currBlock][\m{v}]$
\ElsIf{$\m{expr}$ is ``$\m{v.f}$"}
	\State $term \gets ComputeHash(\m{v})$
	\If{$\hashFunction[\m{f}]$ does not contain $term$}
		\State $\hashFunction[\m{f}][term] \gets GetNewTerm()$ \label{hf-update}
	\EndIf
	\State \Return $\hashFunction[\m{f}][term]$
\EndIf
\EndFunction
\end{algorithmic}
\end{algorithm}

%\item In the second pass, we associate a term with every subexpression. A term is
%an abstract representation of the value of a well formed subexpression.
%Concretely, it is represented by an integer in our implementation, but we can
%use any other data type, so long as we have a notion of equality on the objects
%of that type and it is possible to create a list of that data type. The set
%$nonNullExprs$ contains the non-null terms in a particular block, $defaultVar$
%is a mapping from the term to the special variable that can substitute the
%(sub-)expression. Essentially, we take an expression, compute its corresponding
%term, check for its membership in $nonNullExprs$, and if it exists, we substitute
%it by the variable the term is mapped to in $defaultVar$.
%
%\item There are two additional data structures, $hashValue$ maps a variable to a term,
%while $hashFunction$ maps an operator and a list of terms to a term. For e.g.,
%to compute the term corresponding to $(a+b)$, we see that $hashValue[a] = t_1$,
%$hashValue[b] = t_2$ and $hashFunction[+, t_1, t_2] = t_3$, then the term
%corresponding to $(a+b)$ is $t_3$. At any point, if the key is not present in either
%of the maps, we introduce a new term, and save it in the map.

\section{Experimental Evaluation}
\label{Se:Experiments}

%\subsection{Implementation}
We have implemented the algorithms presented in this paper
for the Boogie language  \cite{Leino08boogie}. Boogie
is an intermediate verification language. 
Several front-ends are available that compile source languages 
(such as C/C++ \cite{conf/cav/RakamaricE14,DBLP:conf/sigsoft/LalQ14} 
and C\# \cite{BCT}) to Boogie, making it a useful target for
developing practical tools. 

Our work fits into a broader verification effort.
The \textit{Angelic Verification} (AV) project\footnote{\url{https://www.microsoft.com/en-us/research/project/angelic-verification/}}
at Microsoft Research aims to design push-button technology for finding software defects.
In an earlier effort, AV was targeted to find null-pointer bugs \cite{das2015angelic}. 
Programs from the Windows codebase, in C/C++, were compiled down to Boogie with assertions guarding
every pointer access to check for null dereferences. These Boogie programs were fed to a verification
pipeline that applied heavyweight SMT-solver  technology to reason over all possible program behaviors.
To optimize the verification time, an alias analysis is run at the Boogie level to remove
assertions that can be proved safe by the analysis. As our results will show, this optimization is necessary. The alias
analysis is based on Andersen's analysis, as was described in Figure~\ref{constraints}. We follow
the algorithm given in Sridharan et al.'s report~\cite{manu-lncs} with
the extra constraint that $\#$-tagged variables cannot alias
with \Null, i.e. they cannot contain the allocation site $\asN$. We can optionally perform the program
transformation of Section~\ref{Se:Algorithm} before running the alias analysis. 
Our implementation is available open-source\footnote
{At \url{https://github.com/boogie-org/corral}, project
\textbf{AddOns$\backslash$AliasAnalysis}}.

%\subsection{Results}

\begin{table}[t]
\setlength{\tabcolsep}{5pt}
\centering
\begin{tabular}{| l | r  r  r | r r | r r r |}
\hline
& \multicolumn{3}{|c}{\textbf{Stats}} & \multicolumn{2}{| c }{\textbf{SSA only}} & \multicolumn{3}{| c |}{\textbf{SSA with GVN}} \\
\textbf{Bench} & \textbf{Procs} & \textbf{KLOC} & \textbf{Asserts} & \textbf{Time(s)} & \textbf{Asserts} & \textbf{Time(s)} & \textbf{GVN(s)} & \textbf{Asserts} \\
\hline
\powerfx & 40 & 3.2 & 1741 & 9.08 & 61 & 11.37 & 0.88 & 17 \\
\diskdump & 37 & 8.4 & 4035 & 11.34 & 233 & 17.62 & 1.13 & 45 \\
\hid & 64 & 6.5 & 4375 & 10.26 & 617 & 19.43 & 2.15 & 52 \\
\usbstor & 184 & 20.9 & 7523 & 24.04 & 543 & 33.99 & 2.43 & 123 \\
\mpiomsdm & 284 & 30.9 & 11184 & 35.02 & 1881 & 59.84 & 7.11 & 232 \\
\mpiocontrol & 382 & 37.8 & 12128 & 35.94 & 2675 & 70.71 & 11.13 & 452 \\
\gpio & 453 & 37.2 & 6840 & 36.88 & 1396 & 53.24 & 3.44 & 127 \\
\scsiport & 400 & 43.8 & 12209 & 28.91 & 2854 & 62.27 & 5.38 & 475 \\
\iscsi & 479 & 56.6 & 19030 & 60.05 & 5444 & 106.61 & 12.40 & 508 \\
\usbhub & 998 & 76.5 & 39955 & 171.43 & 2887 & 839.58 & 475.08 & 372 \\
\xusb & 867 & 23.5 & 6966 & 49.17 & 875 & 69.10 & 10.14 & 103 \\
\db & 303 & 14.9 & 8359 & 24.57 & 820 & 59.13 & 13.41 & 210 \\
\io & 419 & 22.1 & 11471 & 38.27 & 869 & 87.07 & 24.03 & 248 \\
\port & 493 & 36.2 & 18026 & 48.56 & 2501 & 149.60 & 41.93 & 478 \\
\ksfilter & 317 & 19.4 & 20555 & 55.07 & 586 & 269.35 & 134.06 & 131 \\
\audio & 809 & 54.0 & 16957 & 62.86 & 2821 & 127.67 & 30.46 & 342 \\
\hline
\textbf{Total} & \textbf{6529} & \textbf{491.9} & \textbf{201354} & \textbf{701.45} & \textbf{27063} & \textbf{2036.58} & \textbf{775.16} & \textbf{3915} \\
\hline
\end{tabular}
\caption{Results showing the effect of SSA and GVN program transformations on the ability of alias analysis to prove safety of non-null assertions.}
\vspace{-2ex}
\label{restable}
\end{table}

We evaluate the effect of our program transformation 
on the precision of alias analysis for checking safety of null-pointer
assertions. The benchmarks are described in the first four
columns of Table~\ref{restable}. We picked $16$ different modules
from the Windows codebase. The table lists an anonymized name for the 
module (\textbf{Bench}), the number of procedures contained in the module (\textbf{Procs}),
the lines of code in thousands (\textbf{KLOC}) and the number of assertions (one per pointer dereference)
in the code (\textbf{Asserts}). It is worth noting that the first ten modules
are the same as ones used in the study with AV \cite{das2015angelic}, while the rest
were added later.

We execute our tool in two modes. In the first, we use SSA and then run the alias 
analysis algorithm. In the second, we perform our GVN transformation on top of SSA
and then run the
alias analysis algorithm. In each case, we list the total time taken by the tool (\textbf{Time(s)}),
including the time to run the transformation, and the number of assertions that were
\textit{not} proved safe (\textbf{Asserts}). In the case of GVN, we also isolate
and list the time taken by the GVN transformation itself (\textbf{GVN(s)}).

The experiments were run (sequentially, single-threaded) on a
server class machine with an Intel(R) Xeon(R) processor (single core)
executing at 2.4 GHz with 32 GB RAM.

It is clear from the table that GVN offers significant increase in precision.
With only the use of SSA, the analysis was able to prove the safety of 
$86.56\%$ of assertions, while with the GVN transformation, we can prune away
$98.05\%$ of assertions. This is approximately a $7$X reduction in the number of
assertions that remain. This pruning is surprising because the alias analysis is still
context and flow insensitive. Our program transformation crucially exploits the fact
that programmers tend to be defensive against null-pointer bugs, allowing the analysis 
to get away with a very coarse abstraction. In fact, this level of pruning meant that
any level of investment in making the analysis more sophisticated (e.g., flow sensitive)
would have very diminished returns. 

The alias analysis itself scales quite well: it finishes on about half a million
lines of code in approximately $700$ seconds with just SSA ($86.56\%$ pruning)
or $2000$ seconds with GVN ($98.05\%$ pruning). We note that there is an increase
in the running time when using GVN. This happens because the transformation introduces
more variables, compared to just SSA. However, this increase in time is more than 
offset by the improvement presented to the AV toolchain. For example, with the GVN transformation,
AV takes $11$ hours to finish the first $10$ modules, whereas with the SSA transformation alone it
does not finish even in $24$ hours. Furthermore, AV reports fewer bugs when using just 
SSA because the extra computational efforts translate to a loss in program coverage as timeouts
are hit more frequently.

\section{Related Work}
\label{Se:RelatedWork}

Pointer analysis is a well-researched branch of static analysis. 
There are several techniques proposed that interplay between 
context, flow and field sensitivity. Our choice of using context-insensitive,
flow-insensitive but field sensitive analysis is to pick a scalable starting
point, after which we add precision at low cost. The distinguishing factor in our
work is: $(1)$ the ability to leverage information from $\m{assume}$ and $\m{assert}$ statements 
(or branch conditions) and $(2)$ specializing for the 
purpose of checking non-null assertions (as opposed to general aliasing assertions).
We very briefly list, in the rest of this section, some of the previous work in adding 
precision to alias analysis or making it more scalable. 

%There are several techniques presented on context-insensitive and flow-insensitive 
%but field-sensitive may-alias techniques
%\cite{spark, whaley-lam-lncs, demand-popl-08, rountev2001points} that
%are most relevant to our methodology.
\paragraph{Context Sensitivity.} 
Sharir and Pnueli~\cite{sharir1978two} introduced the concept of \textit{call-strings}
to add context-sensitivity to static analysis techniques. Call strings may 
grow extremely long and limit efficiency, so Lhot\'{a}k and
Hendren~\cite{spark} used k-limiting approaches to limit the size
of call strings. Whaley and Lam~\cite{whaley-lam-lncs} instead use 
Binary Decision Diagrams (BDDs) to scale a context sensitive analysis.

\paragraph{Flow sensitivity.}
Hardekopf and Lin~\cite{HardekopfCGO} present a staged flow-sensitive
analysis where a less precise auxiliary pointer analysis computes def-use
chains which is used to enable the sparsity of the primary flow-sensitive
analysis. The technique is quite scalable on large benchmarks but
they abstract away the assume statements. De and D'Souza~\cite{DeECOOP}
compute a map from access paths to sets of abstract objects at each
program statement. This enables them to perform strong updates at
indirect assignments. The technique is shown to be scalable only for small
benchmarks, moreover, they also abstract away all assume statements.
Finally, Lerch et al.~\cite{LerchASE} introduce the access-path abstraction,
where access paths rooted at the same base variable are represented
by this base variable at control flow merge points. The technique is quite
expensive even on small benchmarks (less than 25 KLOC) and do not
deal with assume statements in any way.

\paragraph{Other techniques.}
Heintze and Tardieu~\cite{heintze2001demand} improved performance
by using
a demand-driven pointer analysis, computing sufficient information
to only determine points-to set of query variables.
Fink et al.~\cite{FinkTypestate} developed a staged
verification system, where faster and naive techniques run as early
stages to prune away assertions that are easier to prove, which then reduces
the load on more precise but slow techniques that run later. Landi and
Ryder~\cite{LandiConditional} use conditional may alias information
to over-approximate the points-to sets of each pointer. Context
sensitivity is added using k-limiting approach, and a set of aliases
is maintained for every statement within a procedure to achieve
flow-sensitivity. Choi et al.~\cite{ChoiCFS} also follows~\cite{LandiConditional}
closely but uses sparse representations for the control flow graphs
and use transfer functions instead of alias-generating rules.
To the best of our knowledge, none of these techniques are able to
leverage $\m{assume}$ statements to improve precision.

\section{Conclusion}
\label{Se:Conclusion}

This paper presents a program transformation that improves the efficiency
of alias analysis with minor scalability overhead. The transformation is proved 
to be semantics preserving. Our evaluation demonstrates the merit of our
approach on a practical end-to-end scenario of finding null-pointer dereferences
in software. 

%An important future direction is to improve the precision even further
%by coupling this technique with other aliasing techniques. Specifically, as we
%noted that this transformation adds flow sensitivity to the approach, it will
%be interesting to add context sensitivity to this approach and measure the
%improvement of precision. 

%An interesting future work would be to compare our technique 
%with other flow-sensitive techniques, both in efficiency
%and precision. One drawback of this technique is that the technique is
%specialized to null pointer exceptions. An important direction would
%be to make this technique generic to all memory locations. One way
%to do that is to add these special assignments not just to
%$\m{if \; (x != NULL)}$ statements, but all $\m{if \; (x != y)}$
%statements. Nonetheless, we feel that this is an important contribution
%towards an extremely significant research problem.

\bibliographystyle{eptcs}
\bibliography{refs}

\appendix

\section{Proof of Correctness}
\label{Se:Proof}

We sketch the proof of the fact that our transformation preserves semantics.
To substitute an expression, say $\m{expr}$ with a variable, say 
$\m{v}$ at a program
location $L$, we need to prove the following two conditions.
\begin{itemize}
\item Assignment of $\m{v}$ reaches $L$ along
every execution path.
\item $\m{expr}$ and $\m{v}$ evaluate to the same value at $L$
under all possible executions.
\end{itemize}

\subsection{First Condition}
We begin by proving the first condition. Note that substitution only occurs
in the function $GetExpr()$, (line~\ref{main-proof1} in Algorithm~\ref{helper})
and only by variables present in the map
$defaultVar$. Also, only $\varsp$-tagged variables are added to
$defaultVar$ (line~\ref{def-asgn} in Algorithm~\ref{gvn-algo}). 
Hence, $\m{v}$ is tagged with $\varsp$. As is clear from the
algorithm, such a variable is assigned either at line~\ref{asgn1}
or line~\ref{asgn2} in Algorithm~\ref{gvn-algo}. Moreover, in both cases,
this variable is generated afresh before constructing the assignment. Hence,
$\m{v}$ is assigned only once. Let this assignment
location be $S$. It suffices to show that $S$ dominates $L$ (Location
$A$ dominates location $B$ is every path from the entry block to $B$
goes through $A$). We will
prove this using strong induction on the blocks sorted in the topological
order (making it a well founded set). For the sake of convenience, let us
say that $S$ and $L$ are basic blocks. Hence, the statement that we will
prove is the following.
\begin{lemma} \label{lem0}
$P(B) \rightarrow$ $\m{expr}$ can be substituted by 
$\m{v}$ in block $B$ $\Rightarrow$ $S$ dominates $B$.
\end{lemma}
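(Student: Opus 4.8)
The plan is to prove Lemma~\ref{lem0} by strong induction on the position of $B$ in the topological order computed at line~\ref{topsort}, which is well founded because the control-flow graph of each procedure is acyclic. The crux is to pin down where the unique assignment $S$ of the substitution variable $\m{v}$ is placed, so I would first unwind what ``substituted'' means: substitution happens only at line~\ref{main-proof1}, returning $\m{v} = \defaultVar[currBlock][t]$ for the term $t = ComputeHash(\m{expr})$ that the guard at line~\ref{lem1-proof} found in $\nonNullExprs[B]$. Since $\defaultVar[B][\cdot]$ is keyed by the block and is written only at line~\ref{def-asgn} during the processing of $B$ itself, the assignment that backs this entry is one of the statements ``$\m{gvnTmp^\varsp := expr}$'' sitting in $B$'s own statement list. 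The whole argument thus reduces to showing that this backing assignment is genuinely present in $B$ (and precedes the use), so that $S = B$ and hence $S$ dominates $B$.

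For the inductive step I would case-split on how the term $t$ entered $\nonNullExprs[B]$. In the first case $t$ is contributed during the processing of $B$, which happens exactly when $B$ carries a statement ``$\m{assume \; expr \neq \Null}$'' or ``$\m{assert \; expr \neq \Null}$''; the first pass (line~\ref{asgn1}) then inserted ``$\m{gvnTmp^\varsp := expr}$'' into $B$, so $S = B$. In the second case $t$ survives the intersection at line~\ref{lem2-proof}, meaning $t \in \nonNullExprs[blk]$ for every predecessor $blk$ of $B$; the pre-block loop (lines~\ref{start-pre-blk}--\ref{end-pre-blk}) then recovers an expression through $\defaultVar[blk][t]$ for some predecessor, emits a fresh ``$\m{gvnTmp^\varsp := expr}$'' into $B$ at line~\ref{asgn2}, and line~\ref{def-asgn} records it as $\defaultVar[B][t]$. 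Again the backing assignment lands in $B$, so $S = B$. The base case is the entry block, whose predecessor set is empty, so only the first case can apply. What makes the induction necessary is the second case: I would lean on the fact that $B$'s predecessors, being earlier in the topological order, have already been fully processed, so that $\defaultVar[blk][t]$ is defined and the recovery at line~\ref{asgn2} is well formed; packaging this as an invariant that travels with $P(\cdot)$ along the order is the inductive glue.

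The main obstacle I anticipate is not the domination conclusion itself, which collapses to $S = B$, but the bookkeeping needed to justify the second case cleanly. I must argue that the term recomputed for the re-introduced assignment in $B$ actually agrees with the inherited $t$, which requires reconciling the per-block map $hashValue[B]$ (built from the intersection at line~\ref{lem3-proof}) with the global, mutable map $\hashFunction$. Because field stores ``$\m{v.f := expr}$'' delete $\m{f}$ from $\hashFunction$ during processing, I would have to verify that the re-introduced assignments are emitted and hashed before any such store in $B$ perturbs the relevant terms, and that a term present in every predecessor's $\nonNullExprs$ is always accompanied by a defined $\defaultVar[blk][\cdot]$ entry there. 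Tying these invariants to the induction hypothesis, rather than to the domination statement, is where the real work lies.
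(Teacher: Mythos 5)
Your proof is correct but takes a genuinely different route from the paper's. The paper factors the claim into Lemma~\ref{lem1} (substitution implies $t \in \nonNullExprs[B]$) and Lemma~\ref{lem2} ($t \in \nonNullExprs[B]$ implies $S$ dominates $B$), and it is Lemma~\ref{lem2} that carries the inductive weight: membership of $t$ is propagated backwards through the intersection at line~\ref{lem2-proof}, the induction hypothesis gives domination of every predecessor, and domination of $B$ follows. You instead observe that $\defaultVar[B][\cdot]$ is keyed per block and written only at line~\ref{def-asgn} while processing $B$'s own statement list, so the variable returned at line~\ref{main-proof1} is always backed by an assignment inserted into $B$ itself (at line~\ref{asgn1} or line~\ref{asgn2}); hence $S = B$ and domination degenerates to reflexivity, with the topological induction repurposed to show that the recovery step at line~\ref{asgn2} is well formed (predecessors already processed, $\defaultVar[blk][t]$ defined). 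Both readings of the algorithm are defensible, and each buys something: the paper's induction establishes the stronger, variable-independent fact that the block where $t$ first entered $\nonNullExprs$ dominates every block that inherits $t$ --- which is what ultimately justifies that the expression really is non-null at $B$ --- whereas your argument is tighter about which assignment actually backs the substitution and, to its credit, surfaces two obligations the paper silently assumes: that $ComputeHash$ applied to the recovered expression in $B$ still yields the inherited term $t$ despite the mutable global $\hashFunction$ and field removals, and that the re-introduced assignment is hashed before any use that depends on it. If you were to polish this into a full proof, those two invariants (stated along the topological order and discharged together with $P(\cdot)$) are exactly what you would need to add; the domination conclusion itself, as you say, is then immediate.
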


\begin{proof}
Since $S$ dominates itself, $P(S)$ is trivially true. Now, consider $P(B)$.
When we arrive at block $B$ in the second pass, we have already processed
all predecessors of $B$ since we process blocks in the topological order. Let
$t = ComputeHash(\m{expr})$.

\begin{lemma}\label{lem1}
$\m{expr}$ can be substituted by 
$\m{v}$ in block $B$ $\Rightarrow$ $nonNullExprs[B]$ contains $t$.
\end{lemma}

\begin{proof}
Substitution occurs at line~\ref{main-proof1} of Algorithm~\ref{helper},
which can only be reached if line~\ref{lem1-proof} holds.
\end{proof}
Now, for $nonNullExprs[B]$ to contain $t$, $nonNullExprs[blk]$ should also
contain $t$ for all blocks $blk \in B.Preds$, i.e. all predecessors of $B$.
This follows from line~\ref{lem2-proof}
in Algorithm~\ref{gvn-algo}.

\begin{lemma}\label{lem2}
$nonNullExprs[B]$ contains $t \Rightarrow S$ dominates $B$.
\end{lemma}

\begin{proof}
We show Lemma~\ref{lem2} using strong induction on the blocks sorted in
topological order. Clearly, Lemma~\ref{lem2} holds for $S$, as $nonNullExprs
[S]$ contains $t$ and $S$ dominates itself. Since $nonNullExprs[B]$
contains $t$, we know, due to line~\ref{lem2-proof} in Algorithm~\ref{gvn-algo},
that $nonNullExprs[blk]$ contains $t$ for all predecessors $blk$ of $B$. Now, by induction
hypothesis, since the lemma holds for all predecessors of $B$, $S$ dominates
all predecessors of $B$. This implies that $S$ dominates $B$.
\end{proof}
Lemmas~\ref{lem1} and \ref{lem2} together imply Lemma~\ref{lem0},
which is a reformulation of the first condition of the proof
of correctness.
\end{proof}

\subsection{Second Condition}
Let us now prove the second condition.

\begin{lemma}\label{lem3}
If two expressions $\m{e_1}$ and $\m{e_2}$ at locations $L_1$ and $L_2$ respectively
evaluate to the same term $t = ComputeHash(\m{e_1}) = ComputeHash(\m{e_2})$,
then $\m{e_1}$ at $L_1$ and $\m{e_2}$ at $L_2$ evaluate to the same value under
all program executions.
\end{lemma}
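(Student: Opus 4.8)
The plan is to prove Lemma~\ref{lem3} by induction on the structure of the term $t$, mirroring the recursive structure of $ComputeHash$. The key invariant I want to maintain is that whenever $ComputeHash$ assigns the same term to two expressions, the program dynamics that could have distinguished their runtime values have been accounted for — in particular, that no intervening write could have changed the field cells on which the two expressions depend. The induction is on how $t$ was constructed: either $t$ is a fresh term handed to a variable (base case), or $t = \hashFunction[\m{f}][t']$ for some field $\m{f}$ and subterm $t'$ (inductive case).

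\begin{proof}[Proof sketch]
We argue by induction on the construction of the term $t$ in $ComputeHash$.

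\emph{Base case.} Suppose $t$ was produced as $GetNewTerm()$ and stored directly as $hashValue[currBlock][\m{v}]$ for some variable $\m{v}$. A fresh term is minted for a variable only the first time that variable is queried in a block, and the same term is then propagated to other variables exactly through assignments $\m{w := \m{v}}$ (which set $hashValue[currBlock][\m{w}] \gets ComputeHash(\m{v})$) or through the intersection at block merges (line~\ref{lem3-proof}). Thus two variable-expressions sharing term $t$ are connected by a chain of copy assignments along every path reaching the merge, so by the semantics of $\m{x := y}$ they hold the same value at $L_1$ and $L_2$. The merge case is sound precisely because line~\ref{lem3-proof} intersects $hashValue$ over all predecessors: a shared term survives only if every predecessor agreed on it, so the equality holds along every incoming path.

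\emph{Inductive case.} Suppose $t = \hashFunction[\m{f}][t']$ where $\m{e_1} = \m{e_1'.f}$ and $\m{e_2} = \m{e_2'.f}$ with $ComputeHash(\m{e_1'}) = ComputeHash(\m{e_2'}) = t'$. By the induction hypothesis, $\m{e_1'}$ at $L_1$ and $\m{e_2'}$ at $L_2$ evaluate to the same base address $a$. It remains to show that the $\m{f}$-field of $a$ holds the same value at both locations. This is guaranteed by the bookkeeping in $ProcessStmt$: on every store ``$\m{v.f := expr}$'' we execute $\hashFunction.Remove(\m{f})$, which discards every term of the form $\hashFunction[\m{f}][\cdot]$. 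Consequently, a single term $t$ can label both $\m{e_1}$ and $\m{e_2}$ only if no write to field $\m{f}$ occurred on any path between the point where $\hashFunction[\m{f}][t']$ was created and both use sites; otherwise the entry would have been invalidated and a fresh term minted at line~\ref{hf-update}. Since the base addresses agree and the $\m{f}$-cell of that common address is untouched between the two evaluations, the field reads yield the same value.
\end{proof}

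\textbf{The main obstacle} I expect is the inductive case, specifically making rigorous the claim that ``no intervening write to $\m{f}$ occurred.'' The subtle point is that $\hashFunction.Remove(\m{f})$ is a \emph{conservative, field-granular} invalidation: it forgets \emph{all} $\m{f}$-indexed terms on \emph{any} store to $\m{f}$, regardless of whether the written base address could actually alias $a$. I must argue that this coarseness is exactly what makes the step sound without aliasing information — it over-approximates the set of cells that a store might have modified, so surviving terms are conservatively safe — while also handling the block-merge interaction: since $\hashFunction$ is threaded through the topological traversal rather than intersected per block, I need to confirm that processing blocks in topological order (line~\ref{topsort}) means every store on every path to a use has already triggered its $Remove$ before that use is hashed. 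The remaining routine work is to translate these term-level arguments into a statement about concrete execution states, which I would phrase as a simulation invariant relating $hashValue$/$\hashFunction$ entries to equalities that hold in every reachable store at the corresponding program point.
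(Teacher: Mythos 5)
Your proof is correct and follows the same overall decomposition as the paper's: a structural induction on the expression (equivalently, on how the term was built), with the variable base case justified by copy-assignment propagation of terms and the intersection of $hashValue$ over predecessors at line~\ref{lem3-proof}, and the field-access case reduced to the subexpressions. Where you differ is in the inductive step: the paper's own proof stops at ``$\m{v_1}$ and $\m{v_2}$ evaluate to the same value and $\m{f}=\m{g}$, hence $\m{e_1}$ and $\m{e_2}$ evaluate to the same value,'' which is not by itself sufficient --- the $\m{f}$-cell of the common address could have been overwritten between $L_1$ and $L_2$. You correctly identify that the missing justification is the conservative invalidation $\hashFunction.Remove(\m{f})$ on every store to $\m{f}$, which guarantees that a surviving term $\hashFunction[\m{f}][t']$ witnesses the absence of any intervening write to that field along the (topologically ordered) traversal. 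So the ``main obstacle'' you flag is precisely the point the paper elides; your sketch is, in that respect, more complete than the published argument, and turning your simulation-invariant phrasing into a precise statement relating $\hashFunction$ entries to equalities over reachable stores would be the right way to close it.
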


\begin{proof}
We prove this lemma by an outer induction on the structure of the expression,
and an inner induction on the blocks sorted in the topological order.
First, we prove this lemma when $\m{e_1}$ and $\m{e_2}$ are both variables. The map
$hashValue$ stores the terms corresponding to each variable in a
particular block. Therefore, $ComputeHash(\m{e_1}) = hashValue[\m{e_1}]$,
which implies $hashValue[L_1][\m{e_1}] = hashValue[L_2][\m{e_2}]$.
Also, by line~\ref{lem3-proof} in Algorithm~\ref{gvn-algo},
we know that for a block $B$, $hashValue[B]$ contains a variable $\m{v}$ only if it evaluates
to the same term in all its predecessors. By the inner induction hypothesis, this
means that $\m{v}$ evaluates to the same value in each predecessor. Also,
whenever a statement of form ``$\m{y := x}$" is encountered, the term for $\m{x}$
is assigned to $\m{y}$. Since this is the only way that two variables can have
the same term, we have the proof of Lemma~\ref{lem3} for variables.

Now, consider the case when $\m{e_1}$ and $\m{e_2}$ are arbitrary expressions.
Suppose $\m{e_1}$ has the form $\m{v_1.f}$, while $\m{e_2}$ has the form
$\m{v_2.g}$. Since $ComputeHash(\m{e_1}) = ComputeHash(\m{e_2})$, we
know that $\m{f} = \m{g}$, and $ComputeHash(\m{v_1}) = ComputeHash(\m{v_2})$.
This is easy to see from the fact that whenever $hashFunction$ is updated
(line~\ref{hf-update} in Algorithm~\ref{helper}), a
new term is added to it. Now, by the outer induction hypothesis, we have that
$\m{v_1}$ and $\m{v_2}$ evaluate to the same value in all executions
and since $\m{f} = \m{g}$,
we have that $\m{e_1}$ and $\m{e_2}$ evaluate to the same value under
all executions. That concludes the proof of Lemma~\ref{lem3}.
\end{proof}

Essentially, Lemma~\ref{lem3} entails that term is an abstract representation
of the value of an expression. Going back to our original proof of correctness,
the variable $\m{v}$ substitutes expression $\m{expr}$ (at 
line~\ref{main-proof1} in
Algorithm~\ref{helper}) only when 
$\m{v} = defaultVar[L][ComputeHash(\m{expr})]$.
Also, $defaultVar$ is updated only when an assignment of the form
$\m{v := expr}$ is encountered (line~\ref{def-asgn}
in Algorithm~\ref{helper}), and before this
update, $ProcessStmt$ is called on the assignment. This
sets $hashValue[S][\m{v}] = ComputeHash(\m{expr})$.
Combining the two arguments above, we have

\[
\begin{array}{rcl}
ComputeHash(\m{v}) \text{ at } S & = & ComputeHash(\m{expr}) \text{ at }
L\\
ComputeHash(\m{v}) \text{ at } L & = & ComputeHash(\m{expr}) \text{ at } L\\
\end{array}
\]

The last line follows from the fact that $\m{v}$ is tagged with $\varsp$, hence
it is assigned only once, and it is available at $L$ due to the first condition in
the proof of correctness. With the two conditions proved, we have that the
transformation introduced in Algorithm~\ref{gvn-algo} produces a semantically
equivalent program, and executing an alias analysis algorithm on the new program
will not add any false positives.

\end{document}